\definecolor{light-gray}{gray}{0.85}
\newcommand{\hlgray}[1]{{\sethlcolor{light-gray}\hl{#1}}}
\newtheorem{theorem}{Theorem}[section]
\newtheorem{corollary}[theorem]{Corollary}
\theoremstyle{definition}
\newtheorem{claim}{Claim}
\def\Gf{G\setminus \{f\}}
\def\u{u} 
\def\I{I}
\def\Psf{P'}
\def\Hsf{H'}
\def\Htf{\hat{H}}
\def\Q{Q}
\def\P{P}
\def\Pa{P_1}
\def\Pb{P_2}
\def\Pz{P}
\def\PP{\hat P}
\newcommand{\first}[3]{\mathsf{first_{#1}}(#2,#3)}
\newcommand{\last}[3]{\mathsf{last_{#1}}(#2,#3)}
\title{\~{O}ptimal Fault-Tolerant Reachability Labeling in Planar Graphs}
\begin{document}

\author{Shiri Chechik\thanks{Tel Aviv University, Israel, \href{mailto:schechik@tauex.tau.ac.il}{schechik@tauex.tau.ac.il} } 
\and Shay Mozes\thanks{Reichman University, Israel,  \href{mailto:smozes@idc.ac.il}{smozes@idc.ac.il}} 
		\and Oren Weimann\thanks{University of Haifa, Israel, \href{mailto:oren@cs.haifa.ac.il}{oren@cs.haifa.ac.il}}
} 
		
\date{}
		
\maketitle
	
	\thispagestyle{empty}

\begin{abstract}
We show how to assign labels of size $\tilde O(1)$ to the vertices of a directed planar graph $G$, such that from the labels of any three vertices $s,t,f$ we can deduce in $\tilde O(1)$ time whether $t$ is reachable from $s$ in the graph $\Gf$. Previously it was only known how to achieve $\tilde O(1)$ queries using a centralized $\tilde O(n)$ size oracle [SODA'21].  
\end{abstract}

\newpage
\clearpage
\setcounter{page}{1}

\section{Introduction}
Reachability is undeniably one of the most fundamental properties of graphs. Determining the reachability between all pairs of vertices in a directed graph can be done in $\tilde{O}(\min\{n^{\omega}, mn\})$ time (where $\omega<2.373$ is the exponent for matrix multiplication \cite{alman2021refined}).

\paragraph{Reachability oracles.}
A {\em reachability oracle} is a compact data structure that allows to efficiently answer reachability queries between any pair of vertices in the graph. Henzinger et al. \cite{Henzinger2017} provided conditional lower bounds for combinatorial constructions of reachability oracles, showing that no non-trivial combinatorial reachability oracle constructions exist for general directed graphs.\footnote{The term combinatorial is often referred to algorithms that do not utilize fast matrix multiplications.}
Specifically, they proved that it is impossible to design a reachability oracle that simultaneously achieves $O(n^{3-\epsilon})$ preprocessing time and $O(n^{2-\epsilon})$ query time, for any $\epsilon > 0$.

Since non-trivial reachability oracles are not attainable for general graphs, efforts have been directed towards developing improved reachability oracles for specific graph families. Notably, graphs possessing separators of size $s(n)$ admit a straightforward reachability oracle of size $\tilde O(n\cdot s(n))$ and query time $\tilde O(s(n))$.  
Consequently, planar graphs (and more extensive graph classes such as H-minor free graphs) admit oracles of size $\tilde O(n^{1.5})$ and query time $\tilde O(\sqrt{n})$.
In a groundbreaking result, Thorup \cite{Thorup04} introduced a near-optimal reachability oracle for directed planar graphs with $\tilde O(n)$ space and $\tilde O(1)$ query time. Subsequently, Holm et al. \cite{HolmRT15} further improved this construction to a truly optimal oracle with $O(n)$ space and $O(1)$ query time.

\paragraph{Fault-tolerant reachability oracles.}
In real-world networks, the vulnerability to changes and failures is a significant concern, and effectively handling failures has become crucial in modern computing. This motivation has sparked extensive research on data structures that can operate robustly in the presence of failures.
Specifically, a {\em fault-tolerant} reachability oracle is designed to handle queries of the form $s,t,f$ and determines whether vertex $t$ is reachable from vertex $s$ in the graph $\Gf$ (the graph $G$ with vertex $f$ and all its incident edges removed). 

Fault-tolerant reachability oracles have been studied extensively in general graphs, see e.g. \cite{brand2019sensitivity,GeorgiadisIP17,choudhary2016DualFaultTolerant,BaswanaCR18,king2002fully,GeorgiadisGIPU17}. In planar graphs, one can leverage the more powerful fault-tolerant {\em distance} oracles. Baswana et al. \cite{Baswana} introduced a single-source fault-tolerant distance oracle with near-optimal $\tilde O(n)$ space and $\tilde O(1)$ query time.
They further extended their construction to handle the all-pairs variant of the problem at the cost of an increased $\tilde O(n^{1.5})$ space and $\tilde O(\sqrt{n})$ query time.
Subsequently, Charalampopoulos et al.~\cite{faultyOracle} presented an improved fault-tolerant distance oracle for the all-pairs version in planar graphs. Their fault-tolerant oracle accommodates multiple failures; however, it comes with a polynomial trade-off between the size of the oracle and the query time that may be less favorable compared to previous constructions.
 It is important to highlight that, in the context of the all-pairs version, the fault-tolerant oracles mentioned above have considerably worse bounds when compared to the best known distance oracles without faults for planar graphs. 
Finally, in a groundbreaking development, Italiano et al. \cite{Reachability} in SODA 2021 introduced a nearly optimal fault-tolerant reachability oracle for directed planar graphs. Their innovative approach finally achieved near-optimal $\tilde O(n)$ size and $\tilde O(1)$  query time.

\paragraph{Fault-tolerant reachability labeling.}
A more powerful concept than oracles is that of {\em labeling schemes}, where each vertex is assigned a compact label, and the objective is to answer queries based solely on the labels of the involved vertices. Labeling schemes have proven to be valuable in distributed settings where inferring properties like distances or reachablity is advantageous using only local information such as the labels of the source and destination vertices. Such scenarios arise in communication networks or disaster-stricken areas, where communication with a centralized entity may be infeasible or impossible.
Labeling schemes provide a natural and structured framework for studying the distribution of graph information. Various problems have been explored within this model, including adjacency  \cite{Kannan,alstrup2015optimal,petersen2015near,alstrup2015adjacency,AlonN17,bonichon2007short}, distances  \cite{GPPR04,Bar-NatanCGMW22,Thorup04,AbrahamCG12,GavoilleKKPP01}, flows and connectivity \cite{KatzKKP04,HsuL09,Korman10}, and Steiner tree \cite{Peleg05}. See~\cite{rotbart2016new} for a survey.

In this paper, we focus on fault-tolerant reachability labeling in directed planar graphs. The objective is to assign a compact label to each vertex, such that given the labels of any three vertices $s,t,f$, one can efficiently determine whether $t$ is reachable from  $s$ in the graph $\Gf$.

The best previously known labeling scheme for fault-tolerant reachability was the one for fault-tolerant distances by \cite{Bar-NatanCGMW22} in which the label size is $\tilde{O}(n^{2/3})$.
Is this the best possible?
Ideally, the goal would be to devise a labeling scheme in which  the sum of the label sizes is roughly equal to the size of the state-of-the-art oracle. However, achieving this goal is not always possible.
For example, in \cite{CharalampopoulosGLMPWW23}, an almost optimal exact distance oracle is given for directed planar graphs (without faults) of size $O(n^{1+o(1)})$ and query time $\tilde{O}(1)$. On the other hand, it was shown in \cite{GPPR04} that exact distance labels (even without faults) for planar graphs necessitate polynomial-sized labels regardless of the query time.
Given this discrepancy between oracles and labeling schemes for distances in planar graphs, a natural question arises: does the same discrepancy exist for fault-tolerant reachability?
In other words, is it possible to design a labeling scheme for directed planar graphs with label size $\tilde{O}(1)$ and query time $\tilde{O}(1)$? Or does a similar gap exist between fault-tolerant reachability oracles and labeling schemes, as in the case of distances? 

\paragraph{Our results.} We answer this question in the affirmative by providing a near optimal labeling scheme for fault-tolerant reachability in directed planar graphs with $\tilde{O}(1)$ label size and $\tilde{O}(1)$ query time.  

To obtain a fault-tolerant reachability labeling scheme, one might hope to generalize existing labeling schemes for {\em undirected} graphs. 
Specifically, for undirected planar graphs, Abraham et al.~\cite{AbrahamCG12} presented labels of size $\tilde{O}(1)$ that for any fixed $\epsilon>0$, from the labels of vertics $s,t,$ and the labels of a set $F$ of failed vertices, can report in $\tilde O(|F|^2)$ time a $(1+\epsilon)$-approximation of the shortest $s$-to-$t$ path in the graph $G\setminus F$.
One would hope to generalize this result to the directed case, even just settling for the seemingly easier task of reachability, and even for a single fault.
This is particularly tempting because previous results for the failure-free case have successfully generalized from undirected to directed planar graphs.
For example, Thorup~\cite{Thorup04} was able to convert his $(1+\epsilon)$-distance oracle for undirected planar graphs to also work in the directed case by employing clever ideas. However, in the presence of failures, the task becomes significantly more challenging.
In a nutshell, one of the challenges in adapting undirected results to the directed case in the context of fault-tolerant reachability labeling in planar graphs is the following.
In many planar graph papers, including \cite{Thorup04} and \cite{AbrahamCG12}, the algorithms store distances from a vertex to other vertices in relevant separators. In Thorup's reachability oracle~\cite{Thorup04}, the algorithm stores, for each vertex $s$ and each relevant path separator, the first vertex on the path that is reachable from $s$ and the last vertex on the path that can reach $s$.
To determine if vertex $s$ can reach vertex $t$ by a path that intersects the path separator, one can examine the information stored in the labels of $s$ and $t$. By utilizing this stored information, it becomes possible to check the reachability between $s$ and $t$.
One of the main challenges we face with this approach is the occurrence of failures anywhere along the relevant path separator. A faulty vertex $f$ on the path separator requires us to store additional information, including the closest vertex to $f$ that appears after $f$ on the path separator and is reachable from the starting vertex $s$. Considering that failures can happen at any vertex on the path separator, this means we would need to store all vertices that are reachable from $s$ and to $s$ on the path separator. This requirement renders the approach impractical.

To address this issue, we need to adopt a different approach and develop new techniques specifically tailored for accommodating failures in the directed case.

It is worth mentioning that in both our algorithm and the algorithm proposed in \cite{Reachability}, the most challenging scenario arises when all three vertices $s,t,$ and $f$ are situated on the same path separator $P$.
In \cite{Reachability}, this situation was addressed by employing a data structure that extends dominator trees and previous data structures designed for handling strong-connectivity in general (non-planar) graphs under failures \cite{GeorgiadisIP17}. We tackle this case without relying on complex dominator trees or similar sophisticated techniques.
Therefore, our algorithm not only achieves the milestone of introducing a near optimal efficient fault-tolerant labeling scheme for planar reachability, but it also boasts a simpler approach compared to the one employed in \cite{Reachability}. We firmly believe that the simplicity of our algorithm  makes it an important milestone in generalization to labels for approximate distances and for multiple failures in directed planar graphs and related graph families.

\section{Preliminaries}

\paragraph{The decompostition tree.} 
In~\cite[Lemma 2.2]{Thorup04}, Thorup proved that we can assume the graph $G$ has an undirected spanning tree $T$ (i.e., $T$ is an unrooted spanning tree in the undirected graph obtained from $G$ by ignoring the directions of edge) such that each path in $T$ is the concatenation of $O(1)$ directed paths in $G$. 

This way, we can describe the process of decomposing $G$ into pieces in the undirected version of $G$. After describing the decomposition, we will replace each undirected path of $T$ defined in the process by its $O(1)$ corresponding directed paths in $G$. We therefore proceed to describe the decomopostion treating $G$ as an undirected graph with a rooted spanning tree $T$.  

A balanced simple cycle separator~\cite{LTsep} (cf. \cite[Lemma 2.3]{Thorup04}) is a simple cycle $C$ in $G$ whose vertices can be covered by a single path of the (unrooted) spanning tree $T$. The removal of the vertices of $C$ and their incident edges separates $G$ into two roughly equal sized subgraphs.
The recursive decomposition tree $\mathcal T$ of $G$ is defined as follows.
Each node of $\mathcal T$ corresponds to a vertex induced subgraph of $G$ (called a {\em piece}). The root piece of $\mathcal T$ is the entire graph $G$. 
The boundary $\partial H$ of a piece $H$ is a set of vertex disjoint paths that lie on the faces of $H$ that are not faces of $G$.
A vertex is a {\em boundary vertex} if it belongs to some path of $\partial H$. 
The boundary of the root piece is empty.
We define the children of a piece $H$ in $\mathcal T$ with boundary $\partial H$ and a simple cycle separator $C$ recursively. Let $Q$ be the maximal subpath of $C$ that is internally disjoint from $\partial H$.  
The vertices of $H$ that are enclosed by $C$ (including the vertices of $C$) belong to one child of $H$. The vertices of $Q$ and the vertices of $H$ not enclosed by $C$ belong to the other child. 
Note that the vertices of $Q$ are the only vertices of $H$ that belong to both children.
The endpoints of $Q$ that belong to $\partial H$ are called the {\em apices} of $H$.
We call the path $Q$ without the apices of $H$ the {\em separator} of $H$. We do not include the apices in the separator to guarantee that it is vertex disjoint from $\partial H$. 
The boundary $\partial H'$ of a child $H'$ of $H$ consists of the separator of $H$ and of the subpaths of $\partial H$ induced by the vertices of $H'$.
The leaves of $\mathcal T$ (called {\em atomic} pieces) correspond to pieces of size $O(1)$. The depth of $\mathcal T$ is $O(\log n)$. For convenience, we consider all $O(1)$ vertices of an atomic (leaf) piece that are not already boundary vertices as the separator of the piece.
It follows that the boundary $\partial H$ of any piece $H$ consists of  $O(\log n)$ vertex disjoint paths (the subpath induced by the vertices of $H$ on the separators of the of the ancestor pieces of $H$). Also, there are $O(\log n)$ apices along any root-to-leaf path in $\mathcal T$.

Having defined the decomposition tree $\mathcal T$ we can go back to treating $G$ as a directed graph. As we explained above, each path we had discussed in the undirected version of $G$ is the union of $O(1)$ directed path in $G$. From now on when we refer to the separator paths of a piece $H$ (resp., paths of $\partial H$), we mean the set of directed paths comprising the undirected separator of $H$ (resp., the set of directed paths comprising the paths of $\partial H$). 

To be able to control the size of the labels in our construction we need to be aware of the number of pieces of $\mathcal T$ to which a vertex belongs. 
The only vertices of a piece $H$ that belong to both its children are the vertices of the separator path of $H$ and the (at most 2) apices of $H$. 
The above definitions imply that every vertex belongs to the separator of at most one piece in $\mathcal T$. Hence, if a vertex is not an apex, it appears in $O(\log n)$ pieces of $\mathcal T$. 
Apices, on the other hand require special attention because they may belong to many (i.e., $\omega(\log n)$) pieces of $\mathcal T$; High degree vertices may be apices in many pieces, and we will need a special mechanism for dealing with such vertices. 
Dealing with apices (like dealing with holes in other works on planar graphs) introduces technical complications that are not pertinent to understanding the main ideas of our work.\footnote{A reader who is not interested in those details can safely skip the parts dealing with apices and just act under the assumption that each vertex appears in 2 atomic pieces (leaves) of $\mathcal T$, and that the following definition of ancestor pieces of a vertex $v$ just degenerates to the set of $O(\log n)$ ancestors of the 2 atomic pieces containing  $v$.}

We associate with every vertex $v\in G$ the (at most 2) rootmost pieces $H$ in $\mathcal T$ in which $v$ is an apex (or the atomic pieces containing $v$ if $v$ is never an apex). We denote these pieces by $H_v$. Note that every piece that contains a vertex $v$ is either an ancestor of a piece in $H_v$ or a descendant of a piece in $H_v$.
For a vertex $v\in G$ we define the {\em ancestor pieces} of $v$  to be the set of (weak) ancestors in $\mathcal T$ of the pieces $H_v$. 
By definition of $H_v$, every vertex, apex or not, has $O(\log n)$ ancestors pieces. 
We similarly define the ancestor separators/paths/apices of a vertex $v\in G$ as the separators/separator-paths/apices of any ancestor piece of $v$. See Figure~\ref{fig:decomposition}.

\begin{figure}[h]
\begin{center}
\includegraphics[scale=0.18]{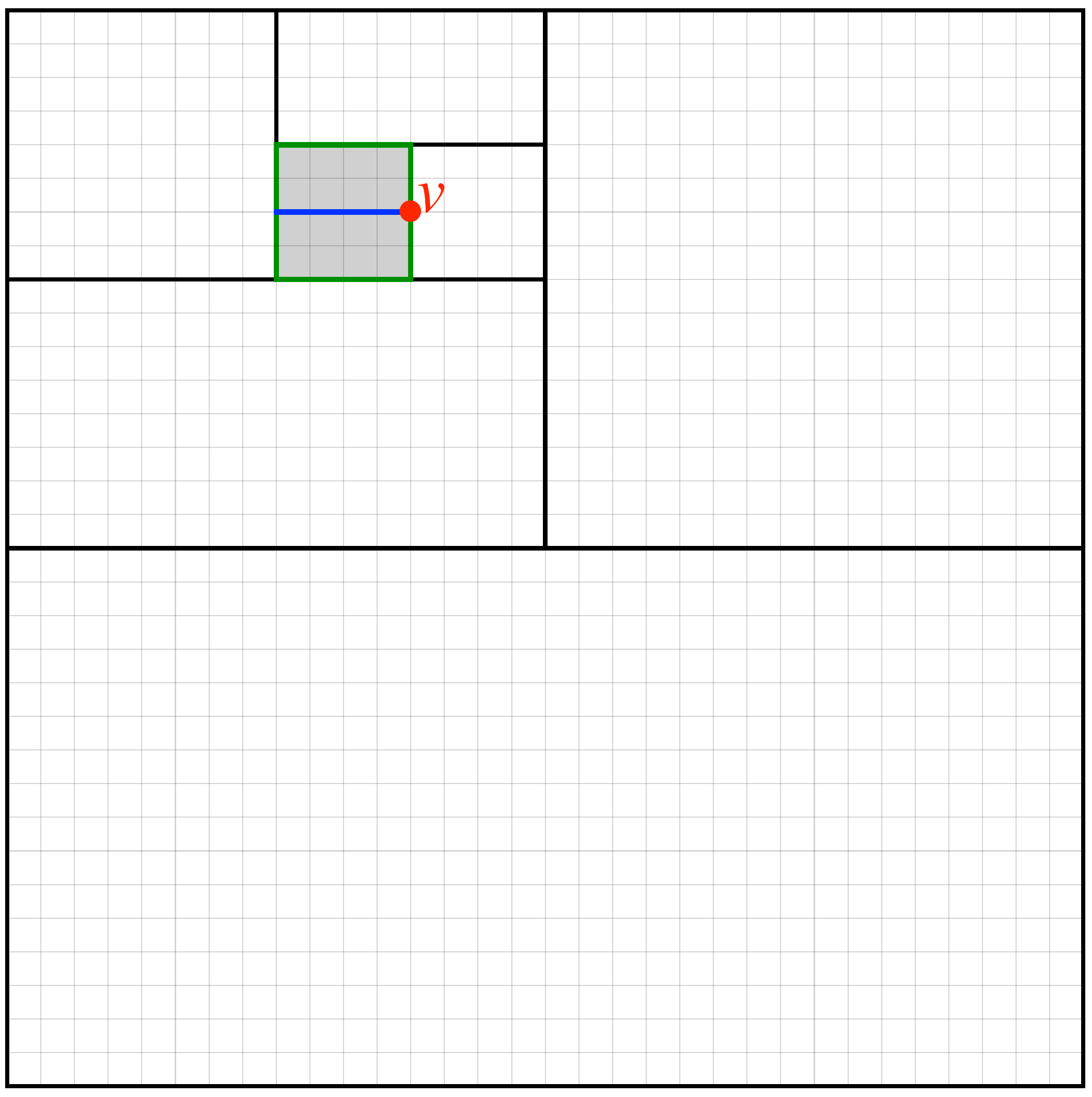}
\caption{Pieces in a recursive decomposition. The separators in this example alternating between horizontal and vertical lines. 
All rectangular pieces that contain the gray piece $H$ are its ancestors.
The boundary $\partial H$ of $H$ is shown in green.
The maximal subpath of the cycle separator of $H$ that is internally disjoint from $\partial H$ is shown in blue.
The vertex $v$ is an apex of $H$ because it is an endpoint of this maximal subpath. The separator of $H$ is the blue path (without its endpoints). 
\label{fig:decomposition}
} 	
\end{center}
\end{figure}

We say that the separator $Q$ of a piece $H$ {\em separates} two vertices $u$ and $v$ (in $H$) if any $u$-to-$v$ path in $H$ must touch the separator $Q$ of $H$ or an apex of $H$. I.e., if at least one of the following holds: (1) $u\in Q$ or $u$ is an apex of $H$, or (2) $v\in Q$ or $v$ is an apex of $H$, or (3) each of $u$ and $v$ is in one distinct child of $H$. 
Note that if $Q$ separates $u$ and $v$ in $H$ then every $u$-to-$v$ path in $G$ either touches $Q$ or touches the boundary of $H$. 

For a subgraph $H$, a path $P$ and a vertex $v$, let $\first{H}{v}{P}$ denote the first vertex of $P$ that is reachable from $v$ in $H$, and let $\last{H}{v}{P}$ denote the last vertex of $P$ that can reach $v$ in $H$. If vertex $u$ appears before vertex $v$ on a path $P$ then we denote this by $u<_P v$ (or simply $u<v$ if $P$ is clear from the context). 
Throughout the paper, we gradually describe the information stored in the labels along with the explanations of why this particular information is stored (and why it is polylogarithmic). To assist the reader, we highlight in gray the parts that describe the information stored. For starters, we let 
\hlgray{
every vertex $v\in G$ store in its label, for every ancestor path $P$ of $v$, the identity of $P$ and, if $v \in P$, the location of $v$ in $P$} (so that given two vertices $u,v$ of $P$, we can tell if $u<v$). We denote $P[u,v]$ the subpath of $P$ between vertices $u$ and $v$.

\paragraph{Thorup's non-faulty labeling.}
Using the above definitions and notations, it is now very simple to describe Thorup's non-faulty reachability labeling~\cite{Thorup04}. 
Consider any vertex $v$. Let $H$ be the rootmost piece in $\mathcal T$ in which $v$ belongs to the separator. 
The crucial observation is that $v$ is separated from every other vertex in $G$ either by the separator of $H$ or by the separator of some ancestor piece of $H$.
Hence, every vertex $v\in G$ stores in its label $\first{G}{v}{P}$ and $\last{G}{v}{P}$ for every path $P$ of the separator of every ancestor of the rootmost piece in which $v$ belongs to the separator. 
Then, given a query pair $u,v$, there exists a $u$-to-$v$ path in $G$ if and only if $\first{G}{u}{P} < \last{G}{v}{P}$ for one of the $O(1)$ paths $P$ of the separator of an ancestor piece of the rootmost piece whose separator separates $u$ and $v$. Both $u$ and $v$ store the relevant information for these paths in their labels. We note that in Thorup's scheme we do not need to worry about apices since each vertex $v$ only stores information in pieces above the first time $v$ appears on a separator. 

\section{The Labeling}

In this section, we explain our labeling scheme. I.e., what to store in the labels so that given the labels of any three vertices $s,f,t$ we can infer whether $t$ is reachable from $s$ in $\Gf$. We call the $s$-to-$t$ path $R$ in $\Gf$ the {\em replacement path}.

Let $\Htf$ be the rootmost piece in $\mathcal T$ whose separator $Q$ separates $t$ and $f$. 
Let $H$ be child piece of $\Htf$ that contains $t$ (if both children of $\Htf$ contain $t$ then, if one of the children does not contain $f$ we choose $H$ to be that child). 
Note that by choice of $H$, $f \notin H \setminus \partial H$. 
We assume without loss of generality that $s \in \Htf$. We handle the other case by storing a symmetric label to the one described here in the graph $G$ with all edges reversed (the reverse graph of $G$ has exactly the same decomposition tree as $G$, but there the roles of $s$ and $t$ are swapped, so the assumption does hold).  
Observe that by definition of $H$ and of separation, $f \in \partial H$ iff $f \in \Q$. 
In what follows, we separately handle the cases where $f \notin \Q$ and $f \in Q$. The main challenge is in the latter case.   
 
\subsection{When $f \notin \Q$ (and so, $f \notin \partial H$)} \label{sec:fnotinQ}

Consider first the case when the replacement path $R$ does not touch $\partial H$. i.e., $s,t$ and $R$ are all contained in $H\setminus \partial H$. 
We handle this case by having \hlgray{each vertex $v \in H \setminus \partial H$ (and in particular $s$ and $t$) store the standard (non-faulty) labeling of Thorup for $H\setminus \partial H$.} 
Since each vertex $v \in G$ is non-boundary in $O(\log n)$ pieces of $\mathcal T$, this contributes $\tilde O(1)$ to the size of the label at each vertex. 

We next consider the case that $R$ touches $\partial H$. In this case, $R$ must have a suffix contained in $H$, and this suffix is unaffected by the fault $f$. More precisely, $R$ exists iff $\first{\Gf}{s}{\P} < \last{H}{t}{\P}$  for one of the paths $\P$ forming $\partial H$.
It is easy to find $\last{H}{t}{\P}$; 
\hlgray{For every vertex $v$ in $H$, if $H$ is an ancestor piece of $v$, then $v$ stores in its label $\last{H}{v}{\P}$ for each of the $O(\log n)$ paths $\P$ of $\partial H$.}
Since each vertex has only $O(\log n)$ ancestor pieces, this contributes $\tilde O(1)$ to the size of the label. 
Notice that by the rootmost choice of $H$, $H$ is an ancestor piece of $t$, so $t$ indeed stores $\last{H}{t}{\P}$.
It thus remains only to describe how to find $\first{\Gf}{s}{\P}$ from the labels of $s$ and $f$. 

\hlgray{For every vertex $s \in G$, for every ancestor apex $f$ of $s$ and for every ancestor path $\P$ of $s$, $s$ stores in its label $\first{\Gf}{s}{\P}$. Similarly, for every vertex $f \in G$, for every ancestor apex $s$ of $f$ and for every ancestor path $\P$ of $f$, $f$ stores in its label $\first{\Gf}{s}{\P}$.}

If either $s$ or $f$ stores $\first{\Gf}{s}{\P}$, we are done. 
Otherwise, consider the set of leafmost pieces in $\mathcal T$ that contain both $s$ and $f$. Let $\Hsf'$ be such a piece. It must be that $\Hsf'$ is an ancestor piece of both $s$ and $f$ or else one of $s$ and $f$ is an ancestor apex of the other and stores $\first{\Gf}{s}{\P}$. It follows that if neither $s$ nor $f$ store $\first{\Gf}{s}{\P}$, then there are only $O(1)$ leafmost pieces that contain both $s$ and $f$. 
To avoid unnecessary clutter we shall assume there is a unique piece $\Hsf'$. In reality we would have to apply the same argument for all $O(1)$ such pieces. 
Since $\Hsf'$ is an ancestor piece of both $s$ and $f$, we can find the piece $\Hsf'$ by traversing the list of ancestors of $s$ (stored in $s$) and of $f$ (stored in $f$) until finding the lowest common ancestor. 
Let $\Hsf$ be the child piece of $\Hsf'$ that contains only $s$ (if $\Hsf'$ is an atomic piece then define $\Hsf=\Hsf'$).
Recall that both $s$ and $f$ are in $\Htf$, so $\Htf$ is a (possibly weak) ancestor of $\Hsf'$. 

\begin{figure}[htb]
  \begin{center}
 \includegraphics[scale=0.18]{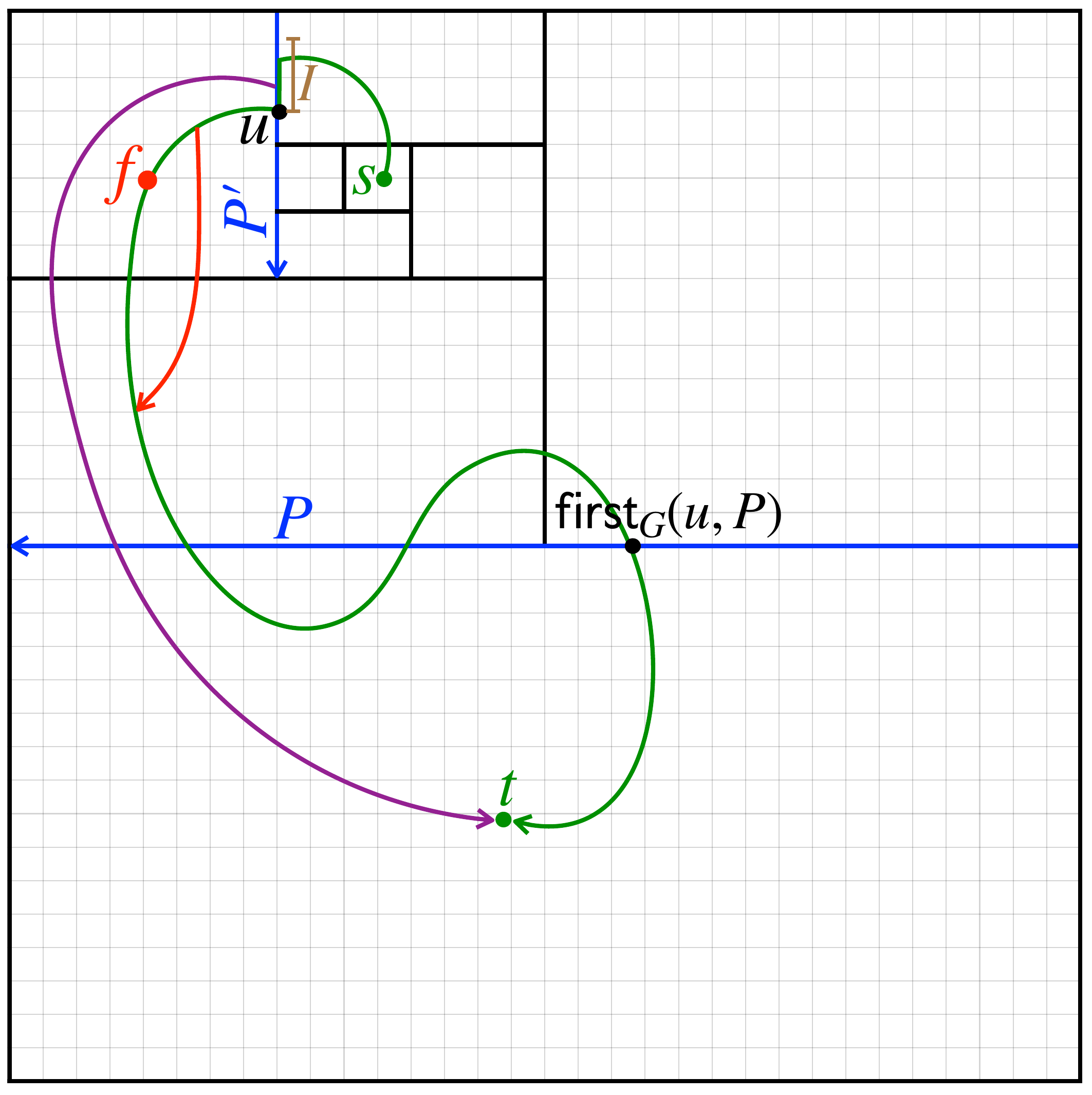}
  \caption{When $f \notin \Q$: The (green) $s$-to-$t$ path in $G$ first touches the (blue) separator path $\Psf\in \partial\Hsf$, then continues along $\Psf$ to $\u$, then goes through $f$ to $\first{G}{u}{\P}$ on the (blue) separator path $\P \in \partial H$, and from there to $t$ (without crossing $\P$ again). The (brown) interval $I$ contains all vertices $p$ that can reach $\first{G}{p}{\P}$ through $u$. When $f$ fails, the replacement $s$-to-$t$ path in $\Gf$ either takes a (red) detour around $f$ (and then $f$ stores this information) or is (purple) completely disjoint from the original (green) $u$-to-$\first{G}{u}{\P}$ part (and then $s$ stores this information).\label{fig:fnotinQ}}
   \end{center}
\end{figure}

Notice that if $\first{G}{s}{\P} \neq \first{G}{f}{\P}$ then the path in $G$ from $s$ to $\first{G}{s}{\P}$ does not go through $f$ and so $\first{\Gf}{s}{\P} = \first{G}{s}{\P}$. To check this, \hlgray{every vertex $s$ in $G$, for each of the $O(\log n)$ ancestor paths $\P$ of $s$, stores $\first{G}{s}{\P}$.} 

We handle the other case, where  $\first{G}{s}{\P} = \first{G}{f}{\P}$ as follows.
Consider a path in $G$ from $s$ to $\first{G}{s}{\P}$.  
We can choose such a path that: (1) begins with a prefix that is contained in $\Hsf$ from $s$ to $\first{\Hsf}{s}{\Psf}$ where $\Psf$ is some path of $\partial\Hsf$, then (2) continues along $\Psf$ until the last vertex $\u$ of $\Psf$ s.t $\first{G}{\u}{\P}  = \first{G}{f}{\P}$, and (3) ends with a suffix $S$ in $G$ from $\u$ to $\first{G}{f}{\P}$. See Figure~\ref{fig:fnotinQ}. Notice that the set of vertices $p\in \Psf$ s.t $\first{G}{p}{\P} = \first{G}{f}{\P}$ is a contiguous interval $\I$ of $\Psf$ ending in $\u$ (this is because any vertex of $\Psf$ earlier than $\u$ can reach $\first{G}{f}{\P}$ through $\u$). 
We think of all $p \in \I$ as reaching $\first{G}{\u}{\P}$ in $G$ using the same suffix $S$. 

\hlgray{For every vertex $f$ of $G$, for every pair of ancestor paths $\Psf,\P$ of $f$, for the maximal interval $\I$ of vertices $p \in \Psf$ such that $\first{G}{p}{P} = \first{G}{f}{P}$, if the path $S$ from the last vertex $u$ of $\I$ to $\first{G}{f}{P}$\footnote{If there are multiple paths in $G$ from $u$ to $\first{G}{f}{P}$ we fix and use an arbitrary such path.}
 goes through $f$, we let $f$ store the indices of the endpoints of $\I$. 
Every vertex $s$ of $G$ stores $\first{A}{s}{\Psf}$ for every ancestor piece $A$ of $s$ and every path $\Psf$ of $\partial A$.}     
This way, we can check if $\first{\Hsf}{s}{\Psf}$ is in the interval $\I$ (the interval stored by $f$ for the pair of paths $\Psf,\P$). If $\first{\Hsf}{s}{\Psf}$ is not in the interval $\I$, then $\first{\Gf}{s}{\P}=\first{G}{s}{\P}$ and we already have $\first{G}{s}{\P}$ stored in $s$. Otherwise, $\first{\Hsf}{s}{\Psf}$ is in the interval $\I$. Let $S_1$ be the prefix of $S$ ending just before $f$, and let $S_2$ be the suffix of $S$ starting immediately after $f$. Then, there are two options regarding the path in $\Gf$ from $s$ to $\first{\Gf}{s}{\P}$: 

\begin{enumerate}

\item The path intersects $S_2$. In this case, the path can continue along $S_2$ until reaching $\first{G}{f}{\P}$, so $\first{\Gf}{s}{\P}= \first{G}{f}{\P}$ and we have it stored in $f$. It only remains to check if this is indeed the case. Consider all vertices $p\in \I$ that can reach $\first{G}{f}{\P}$ in $\Gf$. They must constitute a (possibly empty) prefix of $\I$ (since any such vertex $p\in \I$ can reach any later vertex of $\I$ by going along $\I$). We therefore let \hlgray{$f$ store the index of the last vertex of the prefix of $\I$ that can reach $\first{G}{f}{\P}$ in $\Gf$}. This way, we only need to check if $\first{\Hsf}{s}{\Psf}$ (that is already stored in $s$) is earlier than this vertex.

\item The path is disjoint from $S_2$. To handle this case, we will identify two valid candidates for $\first{\Gf}{s}{\P}$ and take the earlier in $\P$ of these two candidates:

\begin{enumerate}
 
\item The path is disjoint from $S$. To handle this case, \hlgray{for every vertex $s$ of $G$, for every pair of ancestor paths $\Psf$,$\P$ of $s$, for the path $S$ defined by $\Psf,\P$ and $s$\footnote{To be clear, $S$ is the path from the last vertex $u$ of $\Psf$ s.t. $\first{G}{u}{\P}=\first{G}{s}{\P}$ to $\first{G}{s}{\P}$}, $s$ stores $\first{G\setminus S}{s}{\P}$ (i.e., the first vertex of $\P$ that is reachable from $s$ in $G$ using a path that is disjoint from $S$).}   
Since $f \in S$ then $f \notin  G\setminus S$ and so $\first{G\setminus S}{s}{\P}$ is a valid candidate for $\first{\Gf}{s}{\P}$. 

\item  The path intersects $S_1$, so it might as well go through $\u$ (the last vertex of $\I$). To handle this case, \hlgray{$f$ stores $\first{\Gf}{\u}{\P}$}, which is also a valid candidate for $\first{\Gf}{s}{\P}$.  

\end{enumerate}
\end{enumerate}

We summarize the case of $f \notin \Q$ with a general corollary that follows from the above.
\begin{corollary}\label{cor:4.1}
There is a (polylogarithmic-size) labeling to the vertices of $G$ that returns $\first{\Gf}{s}{\P}$ given the labels of any two vertices $s,f$ and the identity of any path $\P$ that (i) belongs to the boundary of an ancestor of $\Hsf$, and (ii) does not contain $f$. 
Symmetrically, we can find $\last{\Gf}{t}{\P}$ from the labels of any two vertices $t,f$.
\end{corollary}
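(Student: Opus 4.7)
The plan is to observe that the corollary is just a packaging of the construction developed throughout Section~\ref{sec:fnotinQ}; accordingly, I would organize the proof around three tasks: (a) describing the query algorithm using only the highlighted fields stored at $s$ and $f$; (b) verifying that the case analysis of the section is exhaustive; and (c) bounding the per-vertex label size by $\polylog(n)$.

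For the query, I would first use the ancestor-piece lists stored at $s$ and $f$ to locate the $O(1)$ candidate leafmost common ancestor pieces $\Hsf'$ in $\mathcal T$, and for each candidate identify the child $\Hsf$ containing only $s$ together with a path $\Psf\in\partial\Hsf$. Comparing $\first{G}{s}{\P}$ (stored at $s$) with $\first{G}{f}{\P}$ (stored at $f$) settles the easy branch; equality forces us to consult the interval $\I$ whose endpoints are stored at $f$ for the pair $(\Psf,\P)$ and to test whether $\first{\Hsf}{s}{\Psf}$, also stored at $s$, lies inside $\I$. Outside $\I$, the answer is $\first{G}{s}{\P}$; inside, it is the earliest on $\P$ among the three candidates justified in the section: $\first{G}{f}{\P}$ (from $f$, accepted only when the prefix-of-$\I$ marker stored at $f$ certifies that $\first{\Hsf}{s}{\Psf}$ still reaches $\first{G}{f}{\P}$ in $\Gf$), $\first{G\setminus S}{s}{\P}$ (from $s$), and $\first{\Gf}{\u}{\P}$ (from $f$). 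Exhaustiveness follows from the topology argument of the section, which splits on whether the replacement path avoids the prefix $S_1$, the suffix $S_2$, or all of $S$.

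For the size, every stored field at a vertex $v$ is indexed by one or two of the $O(\log n)$ ancestor paths of $v$, and in the apex-keyed fields by one of the $O(\log n)$ ancestor apices of $v$; each entry is an $O(\log n)$-bit vertex identifier or an interval endpoint index, giving $\tilde O(1)$ bits per vertex. The symmetric claim about $\last{\Gf}{t}{\P}$ is obtained by running the same construction in the reverse graph $G^R$, which shares the decomposition tree $\mathcal T$ since $T$ is undirected; storing both labels only doubles the size. The main obstacle I anticipate is the apex bookkeeping: when the leafmost common ancestor pieces of $s$ and $f$ do not form a singleton, one must verify that for each of the $O(1)$ candidates the relevant $\first{\Hsf}{s}{\Psf}$ and $\I$-endpoint fields are genuinely present at $s$ and $f$. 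This is routine given the storage conventions but should be explicitly checked.
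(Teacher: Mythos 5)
Your proposal is correct and takes essentially the same route as the paper, which presents \cref{cor:4.1} as a direct summary of the construction of Section~\ref{sec:fnotinQ}: you correctly identify the query algorithm (apex shortcut, $\first{G}{s}{\P}\ne\first{G}{f}{\P}$ test, membership of $\first{\Hsf}{s}{\Psf}$ in $\I$, then the minimum over the conditional candidate $\first{G}{f}{\P}$, the candidate $\first{G\setminus S}{s}{\P}$, and the candidate $\first{\Gf}{\u}{\P}$), the $\tilde O(1)$ size accounting, and the reverse-graph argument for $\last{\Gf}{t}{\P}$. One small clarification on the concern you raise at the end: the apex situation is not folded into the ``$O(1)$ leafmost common pieces'' bookkeeping — the paper short-circuits the case where one of $s,f$ is an ancestor apex of the other by storing $\first{\Gf}{s}{\P}$ explicitly there, and only after that exclusion does one get the $O(1)$ candidate pieces $\Hsf'$ for which your routine check applies.
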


\begin{figure}[htb]
  \begin{center}
 \includegraphics[scale=0.235]{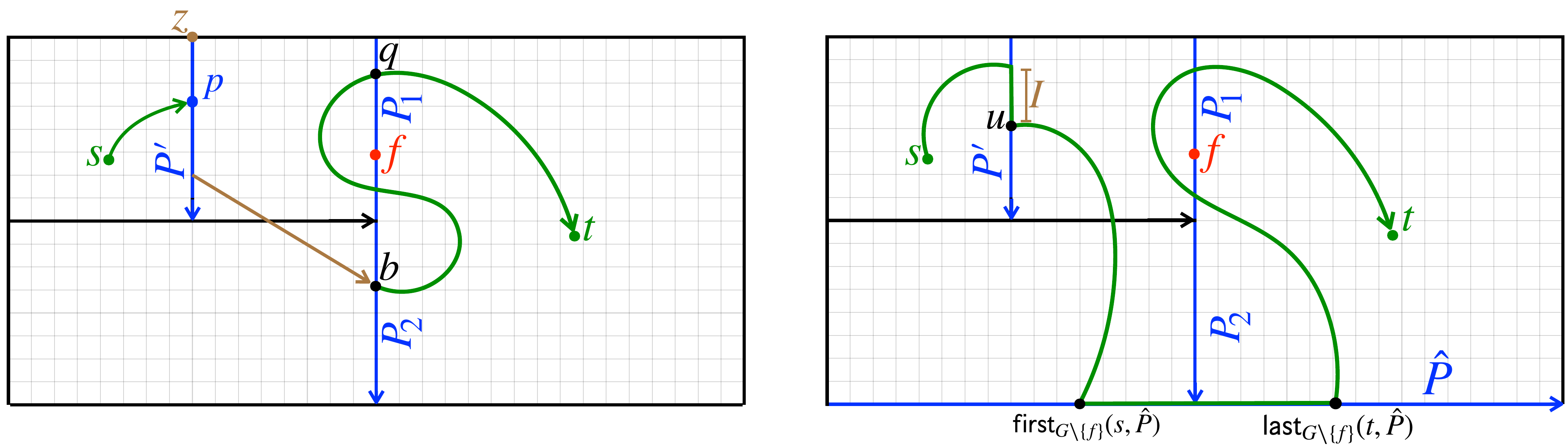}
   \caption{When $f\in \Q$: In the left example, the (green) replacement path $R$ touches only $\P\in \Q$ (the path on which $f$ lies). 
   In this case, $b$ is the first vertex of $\Pb$ that is reachable from $s$ in a path internally disjoint from $\Q$. The path from $b$ to $q=\first{\Gf}{b}{\Pa}$ is then found using the mechanism of Section~\ref{sec:singlePath}. 
   In the right example, the (green) replacement path $R$ touches some other path $\hat P \neq P$ (on which $f$ does not lie). In this case, using  Corollary~\ref{cor:4.1} we check if   
   $\first{\Gf}{s}{\PP} <_{\PP} \last{\Gf}{t}{\PP}$.
    \label{fig:touchesornot}}
   \end{center}
\end{figure}

\subsection{When $f\in \Q$} \label{sec:finQ}

Let $\P$ be the path of $\Q$ that contains $f$. 
Consider first the case where the replacement path $R$ touches some path $\PP\neq P$ of $\Q$ or of the boundary of some ancestor of $H$. Since boundary paths are vertex disjoint, $f\in \P$ implies $f \notin \PP$. Hence, in this case we can use Corollary~\ref{cor:4.1} twice, once to obtain $\first{\Gf}{s}{\PP}$, and once (using the symmetric part of~\cref{cor:4.1}) to obtain $\last{\Gf}{t}{\PP}$. Then, $s$ can reach $t$ in $\Gf$ iff $\first{\Gf}{s}{\PP} <_{\PP} \last{\Gf}{t}{\PP}$. See Figure~\ref{fig:touchesornot} (right). Therefore, in the remainder of this paper we deal with the case where other than $P$, $R$ does not touch any path of $Q$ or any path of the boundary of an ancestor of $H$. In this case, we cannot apply Corollary~\ref{cor:4.1} since $\P$ contains $f$.

Let $\Pa$ be the prefix of $\P$ ending just before $f$, and let $\Pb$ be the suffix of $\P$ starting immediately after $f$. 
Recall that $s$ is assumed to be in $\Htf$, the parent piece of $H$.
Consider the replacement path $R$. Since $R$ does not touch any path on the boundary of an ancestor of $H$, $R$ is contained in $\Htf^\circ = \Htf \setminus \partial \Htf$. 
$R$ starts with an $s$-to-$b$ prefix that is internally disjoint from $\Q$ and $b = \first{\Htf^\circ\setminus \Q}{s}{\Pa}$ or $b = \first{\Htf^\circ \setminus \Q}{s}{\Pb}$. Our first goal is to find these vertices $b$. Finding $b_1 = \first{\Htf^\circ\setminus \Q}{s}{\Pa}$ is easy since $b_1=\first{\Htf^\circ\setminus \Q}{s}{\P}$ (or else $b_1$ does not exist) so we let \hlgray{every vertex $s$ in $G$ store $\first{\Htf^\circ\setminus \Q}{s}{\P}$ for each of the $O(1)$ paths $P$ of each separator $Q$ of each of the $O(\log n)$ ancestor pieces $\Htf$ of $s$.} It therefore remains to find $b_2 = \first{\Htf^\circ\setminus \Q}{s}{\Pb}$. We note that if $s\in \Pa$ then we will not need $b_2$, and if $s\in \Pb$ then $b_2=s$.  

If either $s$ or $f$ is an ancestor apex of the other, then we store $b_2$ explicitly in either $s$ or $f$. That is, \hlgray{for every vertex $s$ (resp. $f$) of $G$, for every ancestor apex $f$ of $s$ (resp. $s$ of $f$), for every ancestor path $\P$ of $s$ (resp. $f$), if $f$ (resp. $s$) lies on $\P$ then $s$ (reps. $f$) stores in its label $b_2 = \first{\Gf}{s}{\Pb}$, where $\Pb$ is the suffix of $\P$ starting after $f$ (resp. $s$).} 

\begin{figure}[htb]
  \begin{center}
 \includegraphics[scale=0.235]{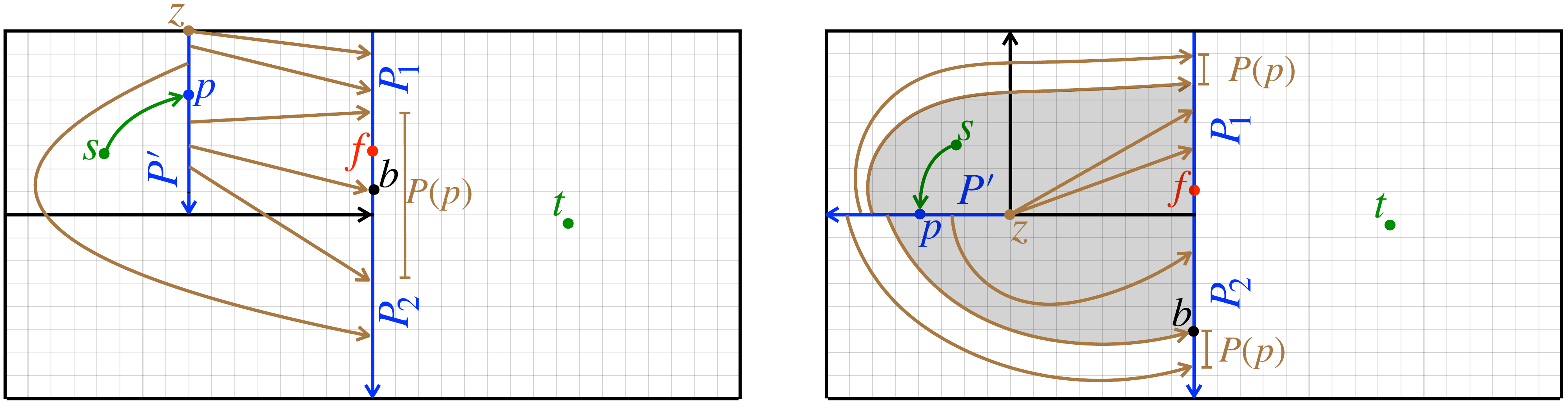}
   \caption{When $f\in \Q$: The vertical (blue) separator path $\P$ (in this example the separator $\Q$ consists of just a single path $\P$) is partitioned by $f$ into $\Pa$ and $\Pb$. The vertex $z$ is the first vertex of $p$'s (blue) path $\Psf$. The brown paths are the ways that $z$ can reach $\P$ in $\Htf^\circ\setminus \Q$. The vertex $b$ is the first vertex of $\Pb$ that is reachable from $p$ in $\Htf^\circ\setminus\Q$. In the left image, $f$ lies inside the interval of vertices $\Pz(p)$ that are reachable from $p$ ($b$ is therefore stored in $f$). In the right image, $f$ lies outside the two intervals $\Pz(p)$ (and $b$ is therefore stored in $s$). The shaded area is the cycle $C$ in the proof of Claim~\ref{claim:z}.  \label{fig:z}}
   \end{center}
\end{figure}

Recall the definition of $\Hsf'$ and $\Hsf$ from Section~\ref{sec:fnotinQ}. As in Section~\ref{sec:fnotinQ}, since we handled the case that $s$ or $f$ are apices, we can assume that $\Hsf'$ and $\Hsf$ are uniquely defined. 
Consider the $s$-to-$b_2$ path in $\Htf^\circ\setminus \Q$. It consists of a prefix that is internally disjoint from $\partial\Hsf$ and ends at a vertex of one of the $O(1)$ paths $\Psf\in \partial\Hsf$ (by definition of $\Hsf$, $f \notin \Hsf$, so $f$ does not belong to this prefix nor to $\Psf$). 
Let $p$ be the first vertex of $\Psf$ that is reachable in $\Htf^\circ$ from $s$ with a path internally disjoint from $\partial \Hsf$. Let $z$ be the first vertex of $\Psf$.   
Let $\Pz(z)$ denote the set of vertices of $\P$ that are reachable from $z$ in $\Htf^\circ$ with paths that are internally disjoint from $\Q$. Note that such paths in $\Htf^\circ \setminus \Q$ from $z$ to the vertices of $\Pz(z)$ all enter $\P$ from the same side (since they cannot touch $\partial\Htf$ nor the separator $\Q$). Moreover, 
since anything reachable in $\Htf^\circ \setminus \Q$ from $p$ is also reachable from $z$, we have that $\Pz(p)\subseteq \Pz(z)$. In fact, planarity dictates the following:
\begin{claim}\label{claim:z}
	The sequence $\Pz(p)$ consists of at most two intervals of consecutive vertices of $\Pz(z)$.
\end{claim}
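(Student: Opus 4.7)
The plan is a planarity argument centered on the cyclic order of edges around $p$ together with the path $\Psf[z,p]$. First, observe that $\Psf[z,p]$ is a directed $z$-to-$p$ path in $\Htf^\circ$; by the structure of the decomposition, $\Psf\in\partial\Hsf$ is (essentially) vertex-disjoint from $\P$, so $\Psf[z,p]$ does not touch $\P$. Consequently, both $z$ and $p$ lie in the same side $L$ of $\P$ in the planar embedding of $\Htf$, and every directed path in $\Htf^\circ\setminus\Q$ from $z$ or from $p$ to a vertex of $\P$ is confined to $L$, meeting $\P$ only at its terminal vertex.

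The key idea is to partition $\Pz(p)$ according to the cyclic order of edges at $p$. The path $\Psf$ contributes at most two edges incident to $p$ (the incoming edge of $\Psf[z,p]$ and, when $p$ is internal to $\Psf$, the outgoing edge along the continuation of $\Psf$), which split the rotation around $p$ into at most two arcs $A_1,A_2$. For each $i$, let $X_i\subseteq\Pz(p)$ be the set of vertices of $\P$ reachable from $p$ in $\Htf^\circ\setminus\Q$ via a directed path whose first edge lies in $A_i$; then $\Pz(p)=X_1\cup X_2$. It therefore suffices to show that each $X_i$ is an interval of $\Pz(z)$.

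To establish this, fix $a,b\in X_i$ with $a<_\P b$ and any $c\in\Pz(z)$ with $a<_\P c<_\P b$; the goal is to show $c\in X_i$. Choose non-crossing directed paths $\beta_a,\beta_b$ from $p$ to $a,b$ whose first edges lie in $A_i$, using a standard leftmost/rightmost selection within the arc. Together with $\P[a,b]$ they form a simple closed curve bounding a region $R$ in $L$. Analyzing the cyclic order at $p$ (where the first edges of $\beta_a,\beta_b$ lie strictly inside $A_i$, while $\Psf[z,p]$ exits $p$ along an edge that is on the boundary of $A_i$), $\Psf[z,p]$ emanates into $L\setminus R$, placing $z$ outside $R$. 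Now let $\alpha_c$ be a directed $z$-to-$c$ path in $\Htf^\circ\setminus\Q$; $\alpha_c$ is internally disjoint from $\P$. Since $c$ lies strictly inside $\P[a,b]$ on the boundary of $R$, $\alpha_c$ must cross $\beta_a$ or $\beta_b$ at some vertex $v$ in order to enter $R$. Without loss of generality $v\in\beta_a$; then $\beta_a[p,v]\cdot\alpha_c[v,c]$ is a directed $p$-to-$c$ path in $\Htf^\circ\setminus\Q$ whose first edge is in $A_i$, so $c\in X_i$, as desired.

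The main obstacle is the topological step placing $z$ outside $R$ and forcing $\alpha_c$ to cross $\beta_a$ or $\beta_b$ at an internal vertex. This is precisely the role of the cycle $C$ depicted in Figure~\ref{fig:z}, which closes $\Psf[z,p]$ through the boundary of $R$ and allows the Jordan curve theorem to pin down the required planar configuration. The degenerate cases $p=z$ and $p$ being an endpoint of $\Psf$ (so only one arc exists at $p$) are handled directly and each yields $\Pz(p)$ as at most one interval.
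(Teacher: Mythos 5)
Your approach is genuinely different from the paper's, and it has a gap at the step you yourself flag as ``the main obstacle.''

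The paper's proof proceeds by contradiction: it takes $u,v\in \Pz(p)$ in two distinct intervals, forms the cycle $C$ from the two $p$-to-$u$ and $p$-to-$v$ paths and $\P[u,v]$, and argues that $z$ must be \emph{enclosed} by $C$. Crucially, this is established by a \emph{global reachability} argument, not a local one: if $z$ were outside $C$, then since there is a vertex $w\in\Pz(z)\setminus\Pz(p)$ strictly between $u$ and $v$, the $z$-to-$w$ path (being internally disjoint from $\P$) would have to cross one of the two $p$-paths, making $w$ reachable from $p$ --- a contradiction. Once $z$ is known to be enclosed, the same Jordan-curve reasoning shows every vertex of $\Pz(z)$ outside $\P[u,v]$ is in $\Pz(p)$, which bounds the number of intervals by two.

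Your argument instead tries to show $z$ is \emph{outside} the region $R$ bounded by $\beta_a$, $\beta_b$, $\P[a,b]$, and it does so purely from the cyclic order of edges at $p$: the last edge of $\Psf[z,p]$ is on the boundary of the arc $A_i$, hence locally points into $L\setminus R$. But this local fact does not place $z$ outside $R$. The path $\Psf[z,p]$ is free to share vertices with $\beta_a$ or $\beta_b$ and cross them (the directions $z\to p$ vs.\ $p\to a$ permit a crossing, producing a directed cycle through $p$ --- perfectly legal in a general digraph), after which it could wind back into $R$, leaving $z$ inside. Nothing in your setup rules this out; $\Psf$ need not be disjoint from $\beta_a,\beta_b$. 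You would need a separate argument, either an uncrossing step that re-chooses $\beta_a,\beta_b$ to avoid $\Psf[z,p]$ while preserving their first edges in $A_i$, or a reachability argument like the paper's that converts any such intersection into a contradiction. Without one of these, the conclusion that $\alpha_c$ must cross $\beta_a$ or $\beta_b$ does not follow, and the proof of ``$X_i$ is an interval'' is incomplete. If you want to pursue your decomposition-by-arcs route, I'd recommend importing the paper's device: rather than reasoning about where $z$ sits topologically from the picture at $p$, use the existence of a vertex reachable from $z$ but not from $p$ to \emph{force} the position of $z$ relative to the relevant cycle.
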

\begin{proof}
Consider two vertices $u,v \in \Pz(p)$ that belong to two disjoint intervals of $\Pz(z)$. Let $C$ be the (undirected) cycle formed by the $p$-to-$v$ path, the $p$-to-$u$ path, and $\P[u,v]$ (see Figure \ref{fig:z}).
The vertex $z$ must be enclosed by $C$, otherwise, the path from $z$ to any vertex of $\P[u,v]$ must intersect the $p$-to-$v$ path or the $p$-to-$u$ path (and is thus reachable from $p$ as well in contradiction to the two intervals being disjoint). Since $z$ is enclosed by $C$, any vertex of $\P$ that is reachable from $z$ and does not belong to $\P[u,v]$ is also reachable from $p$ (since the path to it from $z$ must intersect the $p$-to-$v$ path or the $p$-to-$u$ path).
\end{proof}

\hlgray{For every $s \in G$, for every pair of ancestor pieces $\Hsf, \Htf$ of $s$, for every two paths $\Psf \in \partial \Hsf$, and $\P$ of the separator $\Q$ of $\Htf$, $s$ stores the identities of the endpoints of the two intervals of $\Pz(p)$ in $\Pz(s)$ (where $p$ and $z$ are as defined above).
For every $f \in G$, for every pair of ancestor pieces $\Hsf, \Htf$ of $f$, for every two paths $\Psf$ of the boundary of the sibling of $\Hsf$, and $\P$ of the separator $\Q$ of $\Htf$, $f$ stores the identity of the first vertex $v$ of $\Pz(z)$ that is after $f$ on $\Psf$ (where $z$ is the first vertex of $\Psf$).
}

We can finally describe how to find $b_2 = \first{\Htf^\circ\setminus \Q}{s}{\Pb}$: If the vertex $v$ stored in the label of $f$ falls inside one of the two intervals stored in the label of $s$ (for the common pair $\Psf,\P$) then $b_2=v$ and we have it stored in $f$. Otherwise, $b_2$ is the earliest starting endpoint that is later than $f$ among the two endpoints  of the two intervals stored in the label of $s$ for $\Psf$ and $\P$ (if both intervals are before $f$ on $\P$ then $b_2$ does not exist).

We have therefore achieved our first goal: we found $b_1 = \first{\Htf^\circ\setminus \Q}{s}{\Pa}$ and $b_2 = \first{\Htf^\circ\setminus \Q}{s}{\Pb}$. 
It remains to check if there is a replacement path $R[b,t]$ in $\Gf$ from some $b\in \{b_1,b_2\}$. We can assume that such a path starts from $b$, continues to $q=\first{\Gf}{b}{\Pa}$ or to $q=\first{\Gf}{b}{\Pb}$, continues from $q$ to $\last{\Gf}{t}{\Pa}$ (if $q\in \Pa$) or to $\last{\Gf}{t}{\Pb}$ (if $q\in \Pb$), and then finally continues to $t$. See Figure~\ref{fig:touchesornot} (left).
In Section~\ref{sec:singlePath} we present a labeling scheme that labels the vertices of $\P$ s.t. given the labels of any $b,f\in \P$ we can find both $\first{\Gf}{b}{\Pa}$ and $\first{\Gf}{b}{\Pb}$. A symmetric labeling finds $\last{\Gf}{t}{\Pa}$ and  $\last{\Gf}{t}{\Pb}$. 
We shall refer to this labeling as the {\em secondary} labeling. \hlgray{The secondary labels are stored in the labels of $s$ and $f$ along with the identities of the endpoints of the intervals (for $s$) and the first vertex $v$ (for $f$) as described above.}
This ways, the secondary label of $b_1$ is available in the label of $s$ and the secondary label of $b_2$ is available in the label of $f$ (if $b_2=v$ falls inside one of the two intervals stored in $s$) or of $s$ (if $b_2$ does not fall inside one of the two intervals). Recall that if $s$ or $f$ is an apex then we stored $b_2$ explicitly (in either $s$ or $f$), in this case we additionaly store the secondary label of $b_2$. 

Before moving on to describe the secondary labeling scheme, there is one delicate issue: This labeling scheme requires that the two endpoints of the path $\P$ lie on the same face. In our case, if the separator $Q$ consists of just the path $P$ (as in the examples in the figures), then the endpoints of $P$ already lie on a single face of $\Htf$ (the boundary of $\Htf$). Otherwise, recall that we are working under the assumption that the replacement path $R$ does not touch any path of the cycle separator $\Q$ other than $\P$. 
Hence, before computing the secondary labels we make an incision along the edges of $\Q$ that do not belong to $\P$. Under our assumption the incision does not affect the replacement path, and now the endpoints of $\P$ indeed lie on a single face.  

\section{When the Query Vertices Lie on a Known Path}\label{sec:singlePath}
In this section we present the secondary labeling scheme  (with polylogarithmic-size labels), which addresses the following problem: We are given a directed planar graph $G$ and a single path $\P$ in $G$ whose endpoints lie on the same face. We need to label the vertices of $\P$ such that given the labels of any two vertices $b,f$ of $\P$ we can find the first vertex $p < f$ of $\P$ that is reachable from $b$ in $\Gf$ and the first vertex $p > f$ of $\P$ that is reachable from $b$ in $\Gf$.

\subsection{An auxiliary procedure}\label{sec:auxiliary}
We begin with an auxiliary procedure that will be useful for our labeling. 
In this procedure, we wish to label the vertices of $\P$, such that given the labels of any two vertices $b,f$ such that $f$ is before $b$ on $P$ (i.e., $f < b$), we can find the first vertex $p>f$ of $P$ that is reachable in $G$ from $b$ using a path that does not touch $f$ or any vertex before $f$ (i.e., does not touch any vertex $v \leq f$ of $\P$).

Let $H_P$ be the graph composed of the path $\P$ and the following additional edges: for every pair of vertices $u,v \in \P$ where $u>v$, we add an edge $(u,v)$ iff (1) there exists a $u$-to-$v$ path in $G$ that does not touch $\P$ before $v$, and (2) there is no such $w$-to-$v$ path for any $w>u$. The following claim shows that instead of working with $G$, we can work with $H_\P$:

\begin{claim}\label{claim:HP}
	Given any $f < b$, the first vertex $p>f$ that is reachable from $b$ using a path that does not touch $f$ or any vertex of $\P$ before $f$, is the same in $G$ and in $H_P$. 
\end{claim}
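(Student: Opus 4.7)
The plan is to establish the stronger statement that the set of vertices $p > f$ on $\P$ reachable from $b$ via a \emph{valid} path (one that does not touch $f$ or any vertex of $\P$ before $f$) is identical in $G$ and in $H_P$. Since ``first'' means smallest along $\P$, equality of the first reachable vertex follows at once. I would prove the two inclusions by translating paths between the two graphs.

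For the direction from $H_P$ to $G$, I would take any valid $H_P$-path from $b$ to $p$ and expand each of its edges into its $G$-realization. Forward edges of $\P$ are already edges of $G$. A backward edge $(u,v) \in H_P$ is witnessed, by condition~(1) in the definition of $H_P$, by a $u$-to-$v$ path in $G$ that does not touch $\P$ before $v$; since $v > f$, this witness in particular avoids every $\P$-vertex $\leq f$. Concatenating the expansions gives a $G$-walk from $b$ to $p$ that avoids every $\P$-vertex $\leq f$, and taking a simple subpath yields a valid $G$-path to $p$.

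For the direction from $G$ to $H_P$, I would take a valid $G$-path $\pi$ from $b$ to $p$ and list the $\P$-vertices it visits in order, $b = u_0, u_1, \ldots, u_k = p$. Validity forces every $u_i > f$, and between consecutive $u_i$'s the path $\pi$ is internally disjoint from $\P$. I would then simulate each segment $u_i \to u_{i+1}$ inside $H_P$. If $u_{i+1} > u_i$, I walk forward along $\P$; every intermediate $\P$-vertex lies in $(u_i, u_{i+1})$ and is therefore $> u_i > f$. If $u_{i+1} < u_i$, then the corresponding subsegment of $\pi$ witnesses condition~(1) for the pair $(u_i, u_{i+1})$, and condition~(2) guarantees an edge $(w, u_{i+1}) \in H_P$ for the largest admissible $w > u_{i+1}$; this $w$ automatically satisfies $w \geq u_i$, so I walk forward along $\P$ from $u_i$ to $w$ and then take the backward edge $(w, u_{i+1})$, with every intermediate $\P$-vertex in $[u_i, w]$ being $> f$. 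Concatenating the simulated segments and extracting a simple subpath yields a valid $H_P$-path from $b$ to $p$.

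The main obstacle I expect is exactly the backward case just described: the translation of a single backward hop of $\pi$ is not a single edge of $H_P$, because $H_P$ keeps only the unique edge into $u_{i+1}$ originating at the \emph{latest} admissible predecessor. The observation that rescues the plan is that this latest predecessor $w$ is necessarily at or after $u_i$ on $\P$, which is precisely what allows the forward $\P$-detour from $u_i$ to $w$ to remain entirely within the region $> f$. Everything else is a routine translation between paths, and the two inclusions together give the claim.
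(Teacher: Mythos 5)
Your proof is correct, but the argument you give for the $G$-to-$H_P$ direction is genuinely different from the paper's. You prove the stronger statement that the entire set of valid-reachable $\P$-vertices beyond $f$ coincides in $G$ and $H_P$, via a segment-by-segment simulation: you decompose $\pi$ at its $\P$-touches $u_0,\dots,u_k$, replace forward hops by forward $\P$-subpaths, and replace each backward hop $u_i \to u_{i+1}$ by a forward walk from $u_i$ to the detour's true head $w\ge u_i$ followed by the detour edge $(w,u_{i+1})$. This requires your key observation that $w \geq u_i$ (which follows since $u_i$ is admissible and $w$ is maximal), and you correctly keep every visited $\P$-vertex strictly above $f$. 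The paper instead exploits the minimality of $p$: since $p$ is the \emph{first} reachable vertex past $f$, the realizing $G$-path $S$ can touch no $\P$-vertex in $(f,p)$ either (else a smaller candidate would be reachable), so $S$ touches no $\P$-vertex before $p$ at all, and conditions (1)--(2) immediately give a single detour $(u,p)$ with $u\ge b$, making the $H_P$-path just a forward $\P$-prefix plus one detour. The paper's route is shorter and tailored to the ``first'' semantics; yours is more general (it gives full reachability preservation, not just of the minimum) and does not rely on the minimality of the target, at the cost of a slightly longer inductive translation. The $H_P$-to-$G$ direction --- expanding detour edges into their $G$-realizations and noting they avoid all $\P$-vertices $\le f$ because their heads are $>f$ --- is essentially identical in both.
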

\begin{proof}
Let $S$ be the corresponding $b$-to-$p$ path in $G$ for $f<p<b$.
The path $S$ visits no vertex of $\P$ before $f$ (by definition of $S$) and no vertex of $\P$ between $f$ and $p$ (by definition of $p$). Hence, in $H_P$ there is an edge $(u,p)$ for some $u>b$, and so $S$ is represented in $H_P$ (by a path composed of a $b$-to-$u$ prefix along $\P$ followed by the single edge $(u,p)$). In the other direction, let $R$ be the corresponding $b$-to-$p$ path in $H_P$. The path $R$ visits no vertex of $\P$ before $f$ (by definition of $R$) and every edge $(u,v)$ of $R$ corresponds to a path in $G$ that visits no vertex of $\P$ before $v$ (and therefore visits no vertex of $\P$ before $f$), hence $R$ is appropriately represented in $G$. 
\end{proof}

We call the edges of $H_P$ that are not edges of $\P$ {\em detours}. 
We will use the fact that detours do not cross (i.e., they form a laminar family):

\begin{claim}
If there is a detour $(u,v)$ then there is no detour  $(w,x)$ with $v<x<u<w$.  
\end{claim}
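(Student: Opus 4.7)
I plan to argue by contradiction and reduce the claim directly to the maximality clause~(2) in the definition of $H_\P$, without invoking planarity. Suppose, toward a contradiction, that both $(u,v)$ and $(w,x)$ are detours of $H_\P$ with $v<x<u<w$ on $\P$. By clause~(1) of the definition, there exists a $u$-to-$v$ path $A$ in $G$ that touches no vertex of $\P$ strictly before $v$, and a $w$-to-$x$ path $B$ in $G$ that touches no vertex of $\P$ strictly before $x$. My target is to exhibit a $w'$-to-$v$ path in $G$ with $w'>u$ that avoids $\P$ before $v$; this would contradict clause~(2) for the detour $(u,v)$.

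The construction is to splice the three pieces $B$, the forward subpath of $\P$ from $x$ to $u$, and $A$ into a single walk $W$ from $w$ to $v$. This is well-defined because $x<u$ on the directed path $\P$, so the middle segment $\P[x,u]$ is an honest directed subpath of $\P$, and each of $B$, $\P[x,u]$, and $A$ lies in $G$. I would then verify that every vertex of $\P$ encountered by $W$ has $\P$-index at least $v$: (i)~$B$ meets $\P$ only at positions $\ge x>v$; (ii)~$\P[x,u]$ visits exactly $x,x{+}1,\dots,u$, all $>v$; and (iii)~$A$ meets $\P$ only at positions $\ge v$.

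Finally, I would short-cut $W$ to a simple $w$-to-$v$ path $R$ in $G$. Because short-cutting only removes vertices, $R$ still avoids every vertex of $\P$ strictly before $v$. Since $w>u$, the existence of $R$ contradicts clause~(2) for $(u,v)$, which forbids any such path from a source strictly later than $u$ on $\P$. This completes the argument.

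I do not anticipate a serious technical obstacle; the only delicate point is the reading of clause~(1), which must permit the path to touch $\P$ at positions $\ge v$ (as the endpoint $v\in\P$ already requires). It is perhaps noteworthy that planarity plays no role in the proof, even though the ``detours form a laminar family'' framing of the surrounding text suggests a geometric non-crossing picture: the purely combinatorial maximality of the source in clause~(2) already rules out the crossing configuration $v<x<u<w$.
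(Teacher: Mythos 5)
Your proposal is correct and is essentially the paper's own proof: both concatenate the path realizing the $(w,x)$ detour, the subpath $\P[x,u]$, and the path realizing the $(u,v)$ detour to obtain a path from $w>u$ to $v$ avoiding $\P$ before $v$, contradicting the maximality clause (2) in the definition of $H_\P$. Your added care about short-cutting the walk to a simple path, and your observation that planarity is not actually used, are both accurate but do not change the substance.
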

\begin{proof}
By concatenating the $(w,x)$ detour, the $x$-to-$u$ subpath of $\P$, and the $(u,v)$ detour, we get a path in $G$ that does not touch $\P$ before $v$ but starts at a vertex $w>u$ contradicting condition (2) of the $H_P$ edges.  
\end{proof}

We now explain what needs to be stored to facilitate the auxiliary procedure. 
We define the {\em size} $|d|$ of a detour $d=(u,v)$ as the number of vertices of $\P$ between $v$ and $u$. 
We say that a vertex of $\P$ is {\em contained} in a detour $d=(u,v)$ if it lies on $\P$ between $v$ and $u$. We say that a detour $d'$ is {\em contained} in detour $d$ if all vertices that are contained in $d'$ are also contained in $d$. Notice that, from Claim~\ref{claim:HP}, the vertex $p$ sought by the auxiliary procedure is an endpoint of the largest detour $d$ that contains $b$ and does not contain $f$. To find $d$, \hlgray{every vertex $v$ of $\P$ stores a set of $O(\log n)$ nested detours $d^v_1,d^v_2,\dots$ where $d^v_1$ is the largest detour containing $v$, and $d^v_{i+1}$ is the largest detour of size $|d^v_{i+1}| \le |d^v_i|/2$ containing $v$. Additionally, for each $d^v_i$, $v$ stores the largest detour $\hat{d}{^v_i}$ that is strictly contained in $d^v_i$ but does not contain $d^v_{i+1}$.} See Figure \ref{fig:nested}.

\begin{figure}[htb]
  \begin{center}
 \includegraphics[scale=0.3]{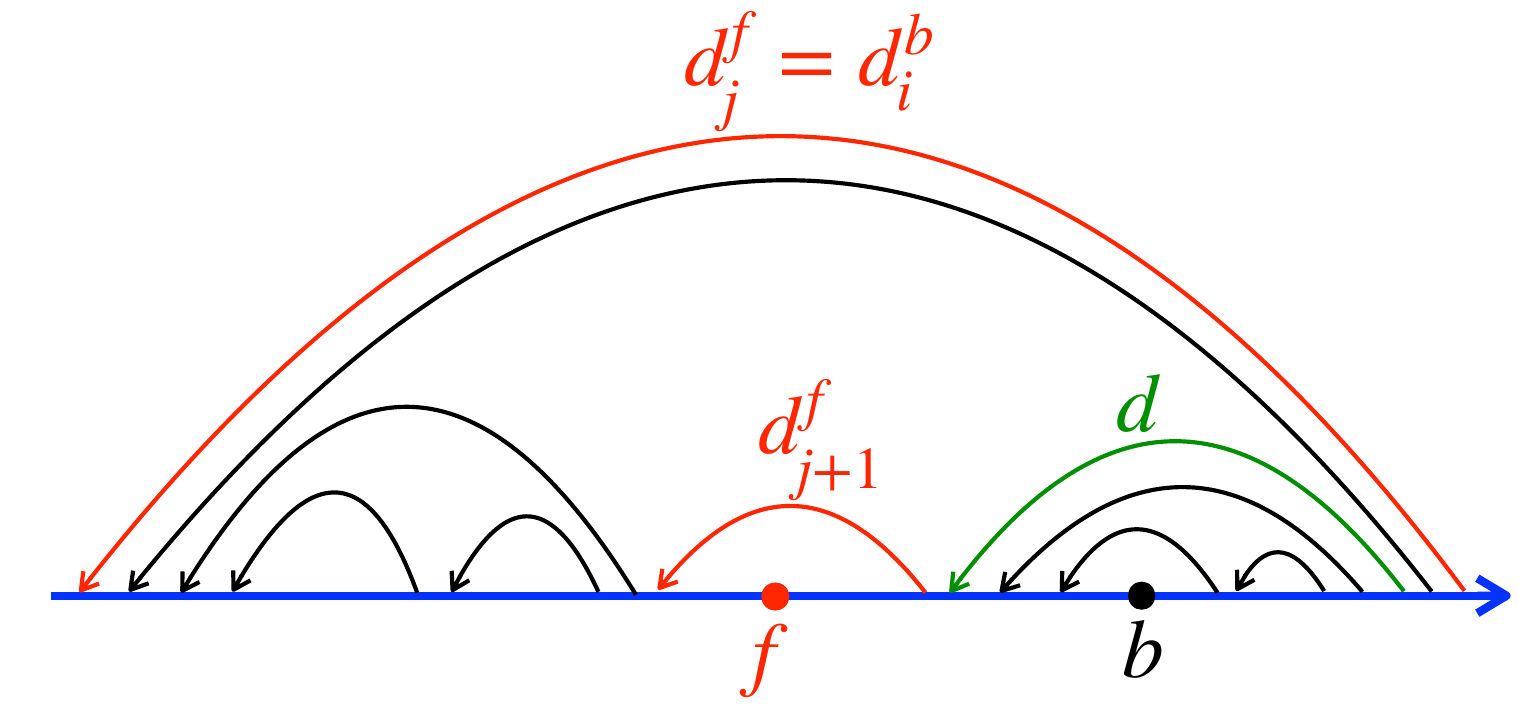}
  \caption{A nested system of $O(\log n)$ detours  $d^f_1,d^f_2,\dots$ of $f$ (in red) used in the auxiliary procedure. The (green) detour $d$ is the largest detour  that contains $b$ and does not contain $f$. Either $d=\hat{d}{^f_j}$  and is stored in $f$, or $d=d^b_{i+1}$ and is stored in $b$.
  \label{fig:nested}}
   \end{center}
\end{figure}

Consider the smallest detour $d^f_j=d^b_i$ that is saved by both $b$ and $f$. If such a detour does not exist then there is no detour that contains both $b$ and $f$ and so  $d= d^b_1$ is stored in $b$. If $|d|>|d^f_j|/2$ then $d$ must be the largest detour that is contained in $d^f_j$ but does not contain $d^f_{j+1}$. In other words, $d=\hat{d}{^f_j}$ and is stored in $f$. If on the other hand $|d|\le |d^f_j|/2$, then $|d|\le |d^b_i|/2$ and by the choice of $d^b_i$ we have that $d^b_{i+1}$ does not contain $f$ and so $d=d^b_{i+1}$ and is stored in $b$. This completes the description of the auxiliary procedure. 

\subsection{The labeling}
Equipped with the above auxiliary procedure, we are now ready to describe the secondary labeling scheme. Recall that there are four cases to consider: Given $b,f \in \P$ where $b$ can be before/after $f$ we wish to find the first vertex $p$ before/after $f$ that is reachable from $b$ in $\Gf$. There are four cases to consider: 

\paragraph{Given $f < b$, find the first vertex $p<f$ that is reachable from $b$ in $\Gf$.} Consider the $b$-to-$p$ path $R$ in $\Gf$. Let $r$ be the first vertex of $R$ that belongs to $\P$ and $r<f$. Let $r'$ be the vertex of $\P$ that precedes $r$ on $R$. Note that $r<f$, that $r'>f$, and that the $r'$-to-$r$ subpath of $R$ (which we call a {\em bypass} of $f$ and denote $R[r',r]$) is internally disjoint from $\P$ and it either emanates to the left or to the right of $\P$. Moreover, since the endpoints of $\P$ lie on the same face, then if $R$ emanates at $r'$ to the left (resp. right) of $\P$ it must enter $\P$ at $r$ from the left (resp. right) of $\P$. This means that we can assume w.l.o.g that the bypass $R[r',r]$ is the largest such bypass (in terms of the number of vertices of $\P$ between $r$ and $r'$). This is because any smaller bypass that is on the same side of $\P$ as $R[r',r]$ is either contained in $R[r',r]$ (in which case we might as well use $R[r',r]$) or intersects $R[r',r]$ implying (in contradiction) that there is a larger bypass.  See Figure~\ref{fig:bypass}.
 
 \begin{figure}[htb]
  \begin{center}
 \includegraphics[scale=0.3]{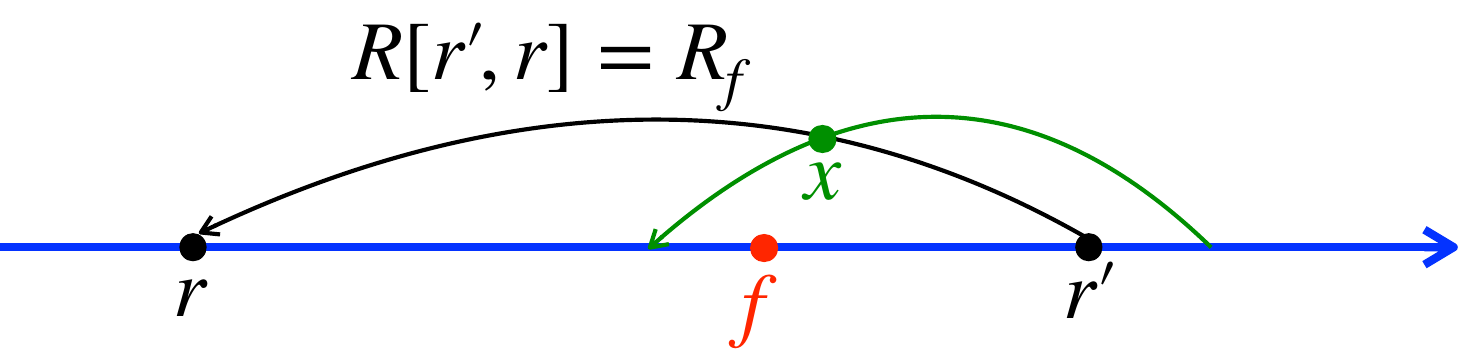}
  \caption{A bypass $R[r',r]$ (in black) of $f$. 
  If there was an intersecting bypass (in green) that is larger than $R[r',r]$ then the green subpath before $x$ and the black subpath after $x$ would constitute a larger bypass.  \label{fig:bypass}}
   \end{center}
\end{figure}

 We let \hlgray{each vertex $f \in \P$ store the endpoints of the largest bypass $L_f$ (resp. $R_f$) that is internally disjoint from $\P$, emanates left (resp. right) of $\P$ at a vertex that is after $f$, and enters $\P$ from the left (resp. right) at a vertex that is before $f$. The vertex $f \in \P$ also stores the first vertex $L_f^+$ (resp. $R_f^+$) of $\P$ that is before $f$ and is reachable in $\Gf$ from the endpoint of $L_f$ (resp. $R_f$) that is after $f$.}
 
In order to find $p$, we first use the auxiliary procedure of Section~\ref{sec:auxiliary} to find the first vertex $p'>f$ that is reachable in $G$ from $b$ using a path that does not touch any vertex of $\P$ before $f$. Then, we check whether $p'$ is contained in $L_f$ and if so we consider $L_f^+$ as a candidate for $p$. Similarly, we check whether $p'$ is contained in $R_f$ and if so we consider $R_f^+$ as a candidate for $p$. Finally, we return the earlier of the (at most two) candidates.

 \paragraph{Given $f < b$, find the first vertex $p>f$ that is reachable from $b$ in $\Gf$.}
This case is very similar to the previous case. The only differences are: (1) we let \hlgray{every vertex $f \in \P$ store the first vertex $L_f^-$ (resp. $R_f^-$) of $\P$ that is after $f$ and is reachable in $\Gf$ from the endpoints of $L_f$ (resp. $R_f$)}, and (2) we add $p'$ itself as a third possible candidate for $p$.

 \paragraph{Given $b < f$, find the first vertex $p>f$ that is reachable from $b$ in $\Gf$.}
 This case and the next one are handled by small (but not symmetric) modifications of the previous two cases.
 Consider the $b$-to-$p$ path $R$ in $\Gf$. Let $r$ be the first vertex of $R$ that belongs to $\P$ and $r>f$. Let $r'$ be the vertex of $\P$ that precedes $r$ on $R$. The subpath $R[r',r]$ (which we call a {\em byway} of $f$) is internally disjoint from $\P$, and if it emanates at $r'$ to the left (resp. right) of $\P$ then it must enter $r$ from the left (resp. right) of $\P$. We can assume w.l.o.g that  $R[r',r]$ is the smallest such byway (because any larger byway that is on the same side of $\P$ as $R[r',r]$ either contains $R[r',r]$ (in which case we might as well use $R[r',r]$) or intersects $R[r',r]$ implying (in contradiction) that there is a smaller byway. See Figure~\ref{fig:byway}.
 
 We let \hlgray{each vertex $f \in \P$ store the endpoints of the smallest byway $L_f$ (resp. $R_f$) containing $f$ that is to the left (resp. right) of $\P$. The vertex $f \in \P$ also stores the first vertex $L_f^-$ (resp. $R_f^-$) of $\P$ that is after $f$ and is reachable in $\Gf$ from the endpoint of $L_f$ (resp. $R_f$) that is after $f$.} 
 
In order to find $p$, we begin by finding the first vertex $p'<f$ that is reachable in $G$ from $b$ using a path that does not touch any vertex of $\P$ after $f$. This is done by using a variant of the auxiliary procedure of Section~\ref{sec:auxiliary} with the following modification. In $H_P$ we add the detour $(u,v)$ iff (1) there exists a $u$-to-$v$ path in $G$ that does not touch $\P$ before $v$, and (2) there is no such $u$-to-$w$ path for any $w<v$.
After finding $p'$ using this mechanism, we check whether $p'$ appears on $P$ before the starting point of the byway $L_f$, and if so we consider $L_f^-$ as a candidate for $p$. Similarly, we check whether $p'$ appears on $P$ before the starting point of the byway $R_f$ and if so we consider $R_f^-$ as a candidate for $p$. Finally, we return the earlier of the (at most two) candidates.

 \begin{figure}[htb]
  \begin{center}
 \includegraphics[scale=0.3]{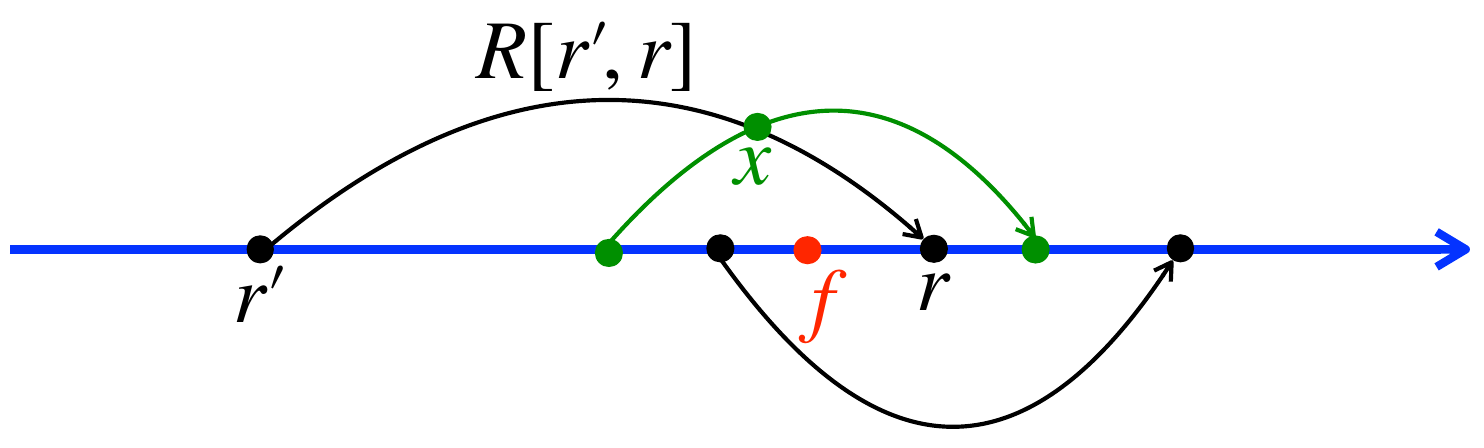}
  \caption{Two byways of $f$ (in black), the top one is $R[r',r]$. If there was an intersecting byway (in green) that is smaller than $R[r',r]$ then the green subpath before $x$ and the black subpath after $x$ would constitute a smaller byway. \label{fig:byway}}
   \end{center}
\end{figure}

 \paragraph{Given $b < f$, find the first vertex $p<f$ that is reachable from $b$ in $\Gf$.}
 This case is very similar to the previous case. The only differences are: (1) we \hlgray{let every vertex $f \in \P$ store the first vertex $L_f^+$ (resp. $R_f^+$) of $\P$ that is before $f$ and is reachable in $\Gf$ from the endpoints of $L_f$ (resp. $R_f$),} and (2) we add $p'$ itself as a third possible candidate for $p$.

\bibliographystyle{abbrv}

\end{document}